\newcommand{\opt}{\ensuremath{\mathop{OPT}}}
\title{Minimizing the Total Movement for Movement to Independence Problem on a Line}
\author{Mehrdad Ghadiri\thanks{Computer Engineering department,
       		Sharif University of Technology, {\tt ghadiri@ce.sharif.edu}}
        \and
        Sina Yazdanbod \thanks{Computer Engineering department,
       		Sharif University of Technology, {\tt  syazdanbod@ce.sharif.edu}}}
\institute{Sharif University of Technology}
\begin{document}
\thispagestyle{empty}
\maketitle

\begin{abstract}
Given a positive real value $\delta$, a set $P$ of points along a line and a distance function $d$, in the movement to independence problem, we wish to move the points to new positions on the line such that for every two points $p_{i},p_{j} \in P$, we have $d(p_{i},p_{j}) \geq \delta$ while minimizing the sum of movements of all points. This measure of the cost for moving the points was previously unsolved in this setting. However for different cost measures there are algorithms of $O(n \log(n))$ or of $O(n)$. We present an $O(n \log(n))$ algorithm for the points on a line and thus conclude the setting in one dimension.

\end{abstract}

\section{Introduction}

The problem of minimizing the movement of points to reach a property was introduced first by Demaine et~al.~\cite{DeMv}, which was for the most part in graphical settings.
Many applications appear for the minimizing movement problem is in the contexts of reliable radio networks~\cite{bredin2005deploying}, \cite{corke2004autonomous}, robotics~\cite{Hsiang} and map labeling~\cite{Jiang}\cite{Doddi}.
 In simple terms, the problem of movement to independence on graphs is defined as given a graph $G$ and a set of pebbles $P$, move the pebbles such that no two pebbles occupy the same vertex. They considered the Total Sum measure on different problems. Although, they proved different $\mathrm{NP}$-completeness results for other problems, the problem of whether the movement to independence problem with Total Sum measure is $\mathrm{NP}$-complete, remains open to this day.
Time complexity of the algorithms given in \cite{DeMv} were polynomial in the number of vertices. However, the number of pebbles can be much smaller than the number of vertices of the graph. That is why in \cite{DeTr}, they turned to fixed-parameter tractability. Dumitrescu et~al.~\cite{DuCk} were the first to consider the settings of a real line. They gave LP-based algorithms for movement to independence on a line and on a closed curve with the measure of minimizing the maximum movement of points. In closed-curve version of the problem, authors defined distance as the length of the smallest subcurve between two points. Dumitrescu et~al.~\cite{DuCk}'s algorithms for both real line settings and closed curve settings were recently improved by Li et~al.~\cite{LiMm} with a linear time algorithm.
Our contribution in this paper is considering the problem of Total Sum on the same settings of \cite{DuCk}. 

The rest of this paper is structured as follows. In Section 2, we explain preliminaries and the definitions of our problem. In Section 3, the formal settings of the problem is presented. The algorithm and its proof are written in Section 4 and the $O(nlogn)$ implementation and complexity analysis of it are presented in Section 5. In the end, we conclude the article in the last section and give an open problem for further research.
 
\section{Preliminaries}
In movement to independence problem, we are given a positive real value $\delta$, a set $P$ of points and a distance function $d$ and we wish to move the points to new positions such that any two points are at least $\delta$ apart. The goal is to minimize this movement. There are several different measures of movement. We consider the \emph{$\mathit{\mathop{TotalSum}}$ measure} which is the sum of movements of all points. We examine this problem in the setting of real line. In this section, we define our the terminology and introduce the problems considered in this paper.

\begin{definition}[Configuration]
For a set of points $P$, we define a \emph{configuration} $H$ of $P$ to be a placement of points in the domain. For a point $p \in P$, we use $H(p)$ to denote the location of $p$ in configuration $H$.
\end{definition}

In this paper, we will investigate movement to independence problem in the setting defined bellow.

\begin{definition}[Independence]
Given a set of points $P$, a positive value $\delta \in \mathbb{R}^+$ and a distance function $d$, a configuration $H$ is called independent, whenever for every two points $p_{i},p_{j}\in P$, we have $d(H(p_{i}), H(p_{j})) \geq \delta$.
\end{definition}

The formal definition of the movement to independence problem with total sum measure is as follows.
\begin{definition}
Let $P = \{p_1, \ldots, p_n\}$ be a set of points and $I$ be its initial configuration.
Given $\delta \in \mathbb{R}^+$ and a distance function $d$,
find an independent configuration $F$ of $P$, so as to minimizes
$\sum_{i=1}^n d(I(p_i), F(p_i))$.
\end{definition}

The point set $P$ can be from different domains. In the following, we define the distance function used for these domains.

\begin{definition}[On a Line]
For two points $p_i, p_j \in P$ and configurations $H$ and $H'$ (not necessarily different) of points $P$ on the real line, we define distance as $d(H(p_i), H'(p_j)) = |H(p_i) - H'(p_j)|$.
\end{definition}

In our algorithm, we make use of chains of points. In a linear domain, a set of points in a configuration form a chain, whenever they are tightly put together in 
distances of $\delta$.

\begin{definition}
In a configuration $H$ of points $P$,
we call a subset $C = \{q_1, \ldots, q_j\}$ of $P$, where $H(q_1) < \cdots < H(q_j)$, a \emph{chain} in $H$,
 if we have $d(H(q_{i}), H(q_{i+1}))= \delta$ for all $i = 1, \ldots, j-1$ (see Figure~\ref{chain}).
\end{definition}

A chain is maximal if it is not a proper subset of another chain.
Unless noted explicitly, we consider chains to be maximal.
Chain partitioning is the act of partitioning independent configuration into maximal chains. Figure~\ref{chain} shows an example of this partitioning. In this figure the rectangles show the chains.

\begin{figure}[ht]
	\centering
	\includegraphics[width=0.8\columnwidth]{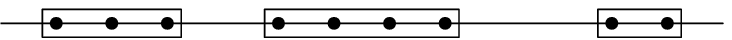}
	\caption{Partitioning $H$ into chains}
	\label{chain}
\end{figure}

\section{Setting of a Real Line}
In this section we study the problem of movement to independence in real line domain with total sum measure.
Let a point set $P = \{p_1, \ldots, p_n\}$ be in $\mathbb{R}$ with initial configuration $I$. For the sake of simplicity, we assume that the input is given in sorted order and that points are initially in distinct positions. That is $I(p_i) < I(p_j)$ for every $i < j$. The following lemma shows that in fact we can make such assumptions.

\begin{lemma}
\label{lemma:lineorder}
The initial order of points $P$ is preserved in the optimal configuration $\opt$. In other words, an optimal configuration $\opt$ exists in which for any two points $p_i, p_j \in P$ with $I(p_i) < I(p_j)$, we have $\opt(p_i) < \opt(p_j)$.
\end{lemma}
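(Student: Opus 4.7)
The plan is to prove the lemma by an exchange argument: starting from any optimal configuration with an inversion, I will swap the positions of two out-of-order points and show that the resulting configuration remains independent and has total cost no larger than before. Iterating this swap (or, equivalently, sorting the multiset of optimal positions and reassigning them to the $p_i$'s in the order induced by $I$) yields an optimal configuration in which the original order is preserved.

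More concretely, suppose $\opt$ is optimal and there exist $i<j$ with $I(p_i) < I(p_j)$ but $\opt(p_i) > \opt(p_j)$. Define $\opt'$ to agree with $\opt$ on every point other than $p_i, p_j$, and set $\opt'(p_i) := \opt(p_j)$ and $\opt'(p_j) := \opt(p_i)$. Since the multiset of positions used by $\opt'$ is exactly the multiset used by $\opt$, every pairwise distance is unchanged, so $\opt'$ is still independent. Thus it suffices to compare costs.

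For the cost comparison, let $a = I(p_i)$, $b = I(p_j)$, $c = \opt(p_j)$, $d = \opt(p_i)$, so that $a < b$ and $c < d$. The claim reduces to the one-dimensional rearrangement inequality
\[
|a-c| + |b-d| \;\le\; |a-d| + |b-c|.
\]
I would prove this by examining $h(x) := |x-d| - |x-c|$ on $\mathbb{R}$: a direct computation gives $h(x) = d-c$ for $x \le c$, $h(x) = c+d - 2x$ for $c \le x \le d$, and $h(x) = c-d$ for $x \ge d$, so $h$ is non-increasing. Since $a < b$, we obtain $h(a) \ge h(b)$, which is exactly the inequality above. Hence the cost of $\opt'$ is at most that of $\opt$, and $\opt'$ is also optimal with one fewer inversion.

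Because the number of inversions is a non-negative integer that strictly decreases with each such swap, this process terminates in finitely many steps, producing an optimal configuration with no inversions, i.e., one in which $I(p_i) < I(p_j)$ implies $\opt(p_i) < \opt(p_j)$ (strict inequality follows from independence, since distinct points must be at distance at least $\delta > 0$). The main obstacle is really only the short case analysis establishing the rearrangement inequality above; everything else is bookkeeping.
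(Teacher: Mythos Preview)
Your proof is correct and follows essentially the same swap/exchange argument as the paper: the paper also swaps the positions of a violating pair and asserts that the total movement does not increase, handling the inequality by a three-case picture rather than your monotonicity argument for $h(x)=|x-d|-|x-c|$. One small wording fix: a single swap may remove more than one inversion, so ``one fewer inversion'' should be ``at least one fewer''; the termination argument is otherwise fine.
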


\begin{proof}
Let $\opt$ be an optimal configuration. Assume that there exist $p_i, p_j \in P$ violating this. Then, we can swap the location of $p_i$ and $p_j$ in $\opt$, as in Figure~\ref{order}, without increasing the total movement of points. Based on symmetry, one should only consider the three situations shown in Figure~\ref{order}.
\end{proof}
\begin{figure}[ht]
	\centering
	\includegraphics[width=\columnwidth]{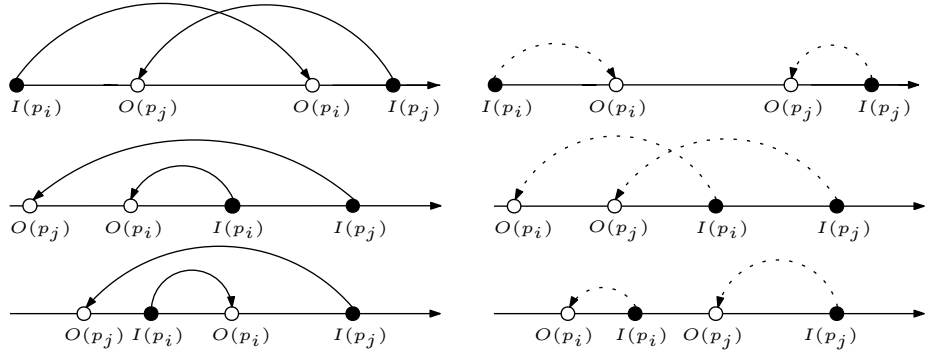}
	\caption{Three different situations of unordered points that can be put in order, without increasing total sum. Figures on the left shows the problem and figures on the right show the reordered version without moving the points}
	\label{order}
\end{figure}

In a given configuration $H$, We define the sets $L_H(S)$, $R_H(S)$ and $O_H(S)$ as follows.
\begin{definition}
For a given configuration $H$ of points $P$ with initial configuration $I$. For a subset $S \subseteq P$.
\begin{itemize}
\item $L_H(S) = \{ p \in S \mid H(p) < I(p) \}$
\item $R_H(S) = \{ p \in S \mid H(p) > I(p) \}$
\item $O_H(S) = \{ p \in S \mid H(p) = I(p) \}$
\end{itemize}
We may use the notations $L(S)$, $R(S)$ and $O(S)$ instead, when there is no ambiguity.
\end{definition}

Our algorithm for real line domain is iterative and adds one point at a time until all the points are inserted. Let $I$ be the initial configuration of points $P$ and $H_i$ be the configuration generated by our algorithm at the end of $i$-th iteration, which is an independent configuration (Note that $H_i$ is defined over the set $\{p_1, \ldots, p_i\}$). Let $C =\{c_1, \ldots, c_k\}$ be the sorted set of chain partitioning of $H_i$, where $c_k$ is the rightmost chain.

The main idea in this algorithm is that at the end of each iteration of the algorithm, the following properties are preserved for every chain $c_{j} \in C$:
\begin{property}
\label{prop:p1}
For every chain $c_j \in C$, we have
$|L_{H_i}(c_j)| + |O_{H_i}(c_j)| > |R_{H_i}(c_j)|$
\end{property}
\begin{property}
\label{prop:p2}
For every chain $c_j \in C$, we have
$|L_{H_i}(c_j)| \leq |O_{H_i}(c_j)| + |R_{H_i}(c_j)|$
\end{property}

Any configuration with these two properties is ,in a sense, locally optimal. We show that our algorithm creates the optimal solution with these properties.

\section{The Algorithm}

Our algorithm starts from $p_1$ and at each iteration adds one new point to the configuration.
Each iteration of the algorithm consists of two phases.
In the first phase, we insert a new point into the current configuration and if that violates Property~\ref{prop:p1}, in the second phase, we restore that property. In the following, we explain the procedure for each phase.

\subsection{Phase 1}
Assume that we have the configuration $H_i$ after processing the first $i$ points and let $p_{i+1}$ be the first remaining point and $\{c_1,\ldots,c_k\}$ be the chain partitioning of $H_i$. In this phase, we construct configuration $G_{i+1}$ over the set $\{p_1, \ldots, p_{i+1}\}$, let $G_{i+1}(p_j) = H_i(p_j)$, for all $j=1, \ldots, i$. So, we just need to determine the location of $p_{i+1}$ in $G_{i+1}$.
We consider the two following cases based on the distance of the new point from the last inserted point in our created configuration:

\begin{enumerate}[leftmargin=\parindent,align=left,itemindent=\parindent,label={\textbf{Case \arabic*}. }]
\item If $d(H_i(p_i),I(p_{i+1})) \geq \delta$ and $I(p_{i+1})$ is to the right of $H_i(p_i)$, then we set $G_{i+1}(p_{i+1}) = I(p_{i+1})$. If $d(G_{i+1}(p_i), G_{i+1}(p_{i+1})) > \delta$, then the chain partitioning of $G_{i+1}$ is $\{c_1,\ldots,c_k,c_{k+1}\}$ where $c_{k+1}$ is a new chain which its only member is $p_{i+1}$ (Figure~\ref{cone}) . If $d(G_{i+1}(p_i), G_{i+1}(p_{i+1})) = \delta$, the chain partitioning of $G_{i+1}$ is $\{c_1,\ldots,c_{k-1},c'_k\}$ where $c'_k$ is a new chain which is $c_{k} \cup p_{i+1}$ (Figure~\ref{ctwo}). Clearly, the new point is in $O_{c'_{k}}$( the set of points from $c'_{k}$ that are on their initial location) or $O_{c_{k+1}}$. That is, the resulting configuration preserves Property~\ref{prop:p1} and \ref{prop:p2}. Therefore, there is no need to run Phase~2 and we proceed to the next iteration. Note that in this case $H_{i+1}$ will be $G_{i+1}$.

\begin{figure}[ht]
	\centering
	\includegraphics[width=0.8\columnwidth]{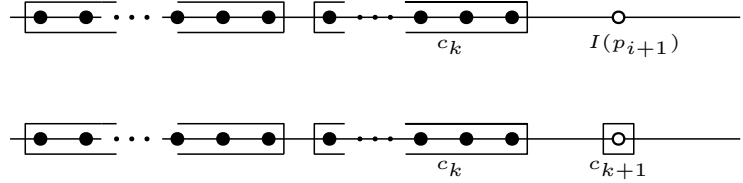}
	\caption{This is the case where $d(H_i(p_i),I(p_{i+1})) > \delta$. The new point will create a chain consisting of itself}
	\label{cone}
\end{figure}

\begin{figure}[ht]
	\centering
	\includegraphics[width=0.8\columnwidth]{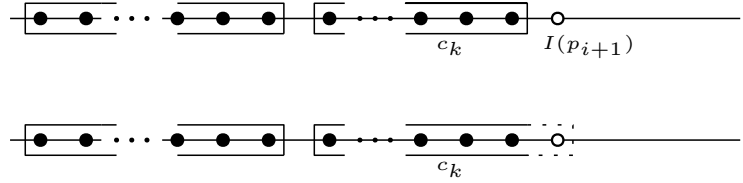}
	\caption{This is the case where $d(H_i(p_i),I(p_{i+1})) = \delta$. The new point will merge with the previous chain}
	\label{ctwo}
\end{figure}

\item If $d(H_i(p_i),I(p_{i+1}))<\delta$ (Figure~\ref{cthree}) or $d(H_i(p_i),I(p_{i+1})) \geq \delta$ and $I(p_{i+1})$ is to the left of $H_i(p_i)$ (Figure~\ref{cfour}), we set $G_{i+1}(p_{i+1})$ to $H_i(p_i)+\delta$. Therefore, the chain partitioning of $G_{i+1}$ is $\{c_1,\ldots,c_{k-1},c'_k\}$ where $c'_k$ is a new chain which is $c_{k} \cup p_{i+1}$. The only complication is that Property~\ref{prop:p1} might get violated in $G_{i+1}$ because $p_{i+1} \in R_{G_{i+1}}(c'_k)$.
In this case, we proceed to Phase~2, to move the chains so that Property~\ref{prop:p1} is restored again. Otherwise, there is no need to run Phase~2 and we proceed to the next iteration. Note that in this case $H_{i+1}$ will be $G_{i+1}$.
\end{enumerate}

\begin{figure}[ht]
	\centering
	\includegraphics[width=0.8\columnwidth]{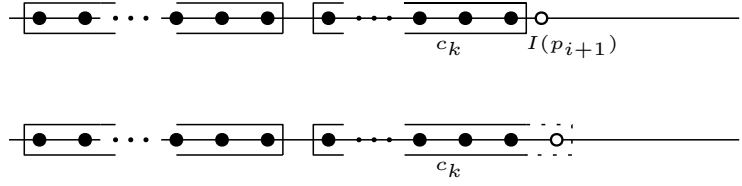}
	\caption{This is the case where $d(H_i(p_i),I(p_{i+1})) < \delta$ but the new point is still located after $H_i(p_i)$. }
	\label{cthree}
\end{figure}

\begin{figure}[ht]
	\centering
	\includegraphics[width=0.8\columnwidth]{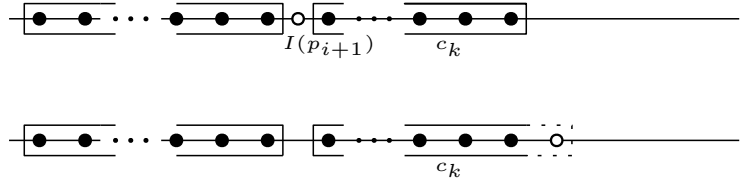}
	\caption{This is the case where $d(H_i(p_i),I(p_{i+1})) \geq \delta$ but the new point is still located before $H_i(p_i)$.}
	\label{cfour}
\end{figure}

\subsection{Phase 2}
In this phase, we construct the configuration $H_{i+1}$ given the configuration $G_{i+1}$ from the previous phase which its chain partitioning is $\{c_1,\ldots,c_{k-1},c'_k\}$. For configuration $H_{i+1}$, we have $H_{i+1}(p_j) = G_{i+1}(p_j)$, if $p_j \in c_1 \cup \cdots \cup c_{k-1}$ and we just need to determine location of points in $c'_k$ in configuration $H_{i+1}$. The reason for running this phase is that Property~\ref{prop:p1} is violated for the chain $c'_k$ in Phase~1, which means
$|R_{G_{i+1}}(c'_k)| \geq |L_{G_{i+1}}(c'_k)| + |O_{G_{i+1}}(c'_k)|$.
Since
$|R_{G_{i+1}}(c'_k)| = |R_{H_i}(c_k)| + 1$
and
$|R_{H_i}(c_k)| < |L_{H_i}(c_k)| + |O_{H_i}(c_k)|$,
we can infer that
$|R_{G_{i+1}}(c'_k)| = |L_{G_{i+1}}(c'_k)| + |O_{G_{i+1}}(c'_k)|$.
It is clear that Property~\ref{prop:p1} and \ref{prop:p2} still hold for the other chains and also Property~\ref{prop:p2} holds for $c'_k$.
To restore Property~\ref{prop:p1} for $c'_k$, we do as follows.

Let
$\alpha = \min_{p_j \in R_{c'_{k}}} | d(G_{i+1}(p_j),I(p_j))|$ be the value of the minimum distance between a point in the new configuration and their initial configuration. 
and
$\beta = d(G_{i+1}(p_r),G_{i+1}(p_l)) - \delta$,
where $p_r$ is the rightmost point of $c_{k-1}$ in configuration $G_{i+1}$ and $p_l$ is the leftmost point of $c'_k$ in that configuration. In other words, $\beta + \delta$ is the distance of the last point of the chain $c_{k-1}$ and first point of the chain $c'_{k}$. We consider the two following cases:

\begin{enumerate}[leftmargin=\parindent,align=left,itemindent=\parindent,label={\textbf{Case \arabic*}. }]
\item If $\alpha < \beta$, we set $H_{i+1}(p_j) = G_{i+1}(p_j) - \alpha$, for all $p_j \in c'_k$. It is clear that we have $|R_{G_{i+1}}(c'_k)| > |R_{H_{i+1}}(c'_k)|$. Therefore $|L_{H_{i+1}}(c'_k)| + |O_{H_{i+1}}(c'_k)| > |R_{H_{i+1}}(c'_k)|$. We also know that $L_{G_{i+1}}(c'_k) \cup O_{G_{i+1}}(c'_k) = L_{H_{i+1}}(c'_k)$ and $R_{G_{i+1}}(c'_k) = O_{H_{i+1}}(c'_k) \cup R_{H_{i+1}}(c'_k)$. Since $|R_{G_{i+1}}(c'_k)| = |L_{G_{i+1}}(c'_k)| + |O_{G_{i+1}}(c'_k)|$, we conclude that $|L_{H_{i+1}}(c'_k)| \leq |O_{H_{i+1}}(c'_k)| + |R_{H_{i+1}}(c'_k)|$.
Therefore, Property~\ref{prop:p1} is restored and Property~\ref{prop:p2} is preserved.

\item If $\alpha \geq \beta$, we set $H_{i+1}(p_j) = G_{i+1}(p_j) - \beta$, for all $p_j \in c'_k$. It is clear that $c_{k-1}$ and $c'_k$ are not maximal chains in $H_{i+1}$. Therefore, the chain partitioning of $H_{i+1}$ is $\{c_1\ldots,c_{k-2},c'_{k-1}\}$, where $c'_{k-1}=c_{k-1} \cup c'_{k}$. We can say two chains $c_{k-1}$ and $c'_k$ are merged (Figure \ref{merge}). We have $|L_{H_{i+1}}(c'_k)| + |O_{H_{i+1}}(c'_k)| \geq |R_{H_{i+1}}(c'_k)|$, because $|L_{G_{i+1}}(c'_k)| + |O_{G_{i+1}}(c'_k)| = |R_{G_{i+1}}(c'_k)|$. We also have $|L_{H_{i+1}}(c_{k-1})| + |O_{H_{i+1}}(c_{k-1})| > |R_{H_{i+1}}(c_{k-1})|$. Hence, we will have $|L_{H_{i+1}}(c'_{k-1})| + |O_{H_{i+1}}(c'_{k-1})| > |R_{H_{i+1}}(c'_{k-1})|$ and Property~\ref{prop:p1} is restored. By a reasoning similar to previous case, we can conclude that Property~\ref{prop:p2} holds for $c'_{k-1}$ in $H_{i+1}$.
\end{enumerate}

\begin{figure}[ht]
	\centering
	\includegraphics[width=0.8\columnwidth]{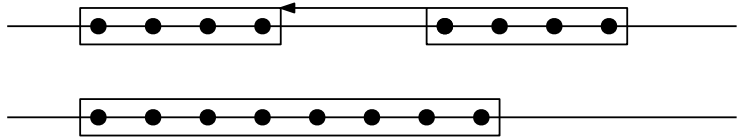}
	\caption{When the left chain reaches $\delta$ radius of the right chain, they merge.}
	\label{merge}
\end{figure}

\subsection{Correctness}

We claim that this algorithm returns an optimal configuration. But before we go on to prove that claim, we state a lemma.

\begin{lemma}
\label{lem:maximal-chains}
Let $c = \{p_i, \ldots, p_j\}$ be a maximal chain in $H_n$. For all $l$, where $i \leq l \leq j$, for the non-maximal chain $c' = \{p_i, \ldots, p_l\}$, we have
\[
|L_{H_n}(c')| + |O_{H_n}(c')| > |R_{H_n}(c')|.
\]
\end{lemma}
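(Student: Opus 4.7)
The plan is to prove, by induction on $i$, a strengthening of Property~\ref{prop:p1}: in every intermediate configuration $H_i$ produced by the algorithm, and for every maximal chain $c = \{p_a,\ldots,p_b\}$ of $H_i$, every nonempty prefix $c' = \{p_a,\ldots,p_l\}$ satisfies $|L_{H_i}(c')| + |O_{H_i}(c')| > |R_{H_i}(c')|$. Setting $i=n$ then yields the lemma. The base case $i=1$ is immediate, since $p_1$ is placed at its initial location and so $|O|=1>0=|R|$.

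For the inductive step I would mirror the four cases of the algorithm. The Phase~1 Case~1 branches are easy: the new point $p_{i+1}$ is placed at its initial position, so it enters $O_{H_{i+1}}$, whether it forms a singleton chain or extends $c_k$; a prefix not involving $p_{i+1}$ is an old prefix handled by induction, and a prefix containing $p_{i+1}$ just adds one element to $|O|$, so strict inequality is preserved. When Phase~1 Case~2 executes without triggering Phase~2, the only genuinely new prefix is the full chain $c'_k$, for which strict inequality follows from Property~\ref{prop:p1}, which by hypothesis is not violated.

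The key structural fact for Phase~2 is that shifting a chain left by an amount $\tau$ moves $O$ points into $L$, keeps $L$ points in $L$, and either keeps $R$ points in $R$, sends them to $O$, or sends them to $L$; hence on every subset $S$ the quantity $|L(S)|+|O(S)|$ is monotone nondecreasing and $|R(S)|$ is monotone nonincreasing under left shifts, so any strict inequality survives. In Phase~2 Case~1 ($\alpha<\beta$), the chain $c'_k$ shifts by $\alpha$ and remains isolated; every prefix of $c'_k$ ending before $p_{i+1}$ is a prefix of $c_k$ in $H_i$, so strict inequality transfers by induction, and strict inequality for the full $c'_k$ is already established in the paper's discussion of that case.

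The main obstacle is Phase~2 Case~2, where $c_{k-1}$ and $c'_k$ merge into $c'_{k-1}$ after a shift of $c'_k$ by $\beta\le\alpha$. Prefixes of the merged chain come in two flavors: those contained in $c_{k-1}$ (handled by the inductive hypothesis applied in $H_i$, since $c_{k-1}$ is unchanged), and those of the form $c''=c_{k-1}\cup\{p_a,\ldots,p_l\}$ with $\{p_a,\ldots,p_l\}$ a prefix of the shifted $c'_k$. I would decompose $|L_{H_{i+1}}(c'')|+|O_{H_{i+1}}(c'')|$ and $|R_{H_{i+1}}(c'')|$ additively across the two pieces. For $c_{k-1}$, Property~\ref{prop:p1} in $H_i$ gives strict inequality, preserved since $c_{k-1}$ is unchanged. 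For $\{p_a,\ldots,p_l\}$ shifted by $\beta$: if $l\le i$, the prefix is a prefix of $c_k$ in $H_i$ and, since $\beta\le\alpha$ ensures that no $R$ point in $c_k$ crosses past its initial position, the inductive strict inequality survives the shift; if $l=i+1$, the paper supplies only the weak inequality $|L(c'_k)|+|O(c'_k)|\ge |R(c'_k)|$ for the full $c'_k$ in $H_{i+1}$. Either way, adding the strict inequality for $c_{k-1}$ yields a strict inequality for all of $c''$. The delicate bookkeeping linking the threshold $\alpha$, the actual shift $\beta$, and the $L/O/R$ reclassification of each point is where the argument requires the most care.
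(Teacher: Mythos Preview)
Your induction is correct, but it is considerably heavier than what the paper does. The paper's argument avoids re-walking the algorithm case by case: it simply freezes time at the end of iteration $l$, the moment $p_l$ was inserted. At that instant the set $c'=\{p_i,\ldots,p_l\}$ is exactly a union $c_m\cup\cdots\cup c_k$ of maximal chains of $H_l$ (chains only merge, never split, so $p_i$---being leftmost in its chain in $H_n$---is also leftmost in its chain in $H_l$). Property~\ref{prop:p1}, already maintained by the algorithm, gives the strict inequality on each $c_t$, and summing gives it on $c'$ in $H_l$. From iteration $l$ onward the points of $c'$ move only leftward, so your own monotonicity observation (left shifts can only increase $|L|+|O|$ and decrease $|R|$) carries the inequality all the way to $H_n$.

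What your approach buys is a self-contained invariant (``every prefix of every chain satisfies the strict inequality in every $H_i$''), proved directly without appealing to how $c'$ decomposes at time $l$; the price is the Phase~2 Case~2 bookkeeping you flagged. What the paper's approach buys is brevity: it piggybacks on Property~\ref{prop:p1} at a single well-chosen moment and then invokes monotonicity once, sidestepping the case analysis entirely.
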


\begin{proof}
In the $l$-th iteration of the algorithm, $p_l$ was inserted into the configuration. Let $\{c_1,\ldots,c_k\}$ be the chain partitioning of $H_l$ and $c'=c_m \cup \cdots \cup c_k$. We have $|L_{H_l}(c')| + |O_{H_l}(c')| > |R_{H_l}(c')|$, because Property~\ref{prop:p1} holds for all chains in $\{c_1,\ldots,c_k\}$. After iteration $l$, points in the $c'$ only move leftwards or does not move in each iteration. Thus, the left side of the inequality is non-decreasing and the right side is non-increasing. Therefore, at the end of every further iteration, the inequality still holds. In particular, the inequality holds at the end of the algorithm.
\end{proof}

Now we have the sufficient tools to prove optimality of the output of this algorithm. We make the argument in two cases, once we take the rightmost difference from the optimal and second we use the leftmost difference. In the end, our solution is optimal or simply a shift of the optimal to the right that does not increase the sum. 
\begin{theorem}
\label{1}
Configuration $H_n$ is optimal.
\end{theorem}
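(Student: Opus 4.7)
The plan is a contradiction argument: assume $\opt \neq H_n$ is also optimal, with $\opt$ order-preserving (Lemma~\ref{lemma:lineorder}). Let $p_i$ be the rightmost index at which $\opt$ and $H_n$ disagree; following the paper's hint, I split on the sign of $\opt(p_i) - H_n(p_i)$, with the first subcase using Lemma~\ref{lem:maximal-chains} and the second switching to the leftmost difference and using Property~\ref{prop:p2}.

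In the subcase $\opt(p_i) < H_n(p_i)$, let $c_H=\{p_a,\dots,p_b\}$ be the maximal chain of $H_n$ containing $p_i$, and let $c'=\{p_a,\dots,p_i\}$ be its prefix. Because $\opt$ is order-preserving and the points of $c_H$ are $\delta$-spaced in $H_n$, for every $p_j\in c'$ one has $\opt(p_j)\le \opt(p_i)-(i-j)\delta<H_n(p_j)$. From this one reads off $L_{H_n}(c')\cup O_{H_n}(c')\subseteq L_\opt(c')$, and hence inside $c'$ the complementary counts satisfy $|R_\opt(c')|+|O_\opt(c')|\le |R_{H_n}(c')|$. I would then shift the points of $c'$ in $\opt$ rightward by some small $\epsilon>0$, which is feasible using the slack $H_n(p_i)-\opt(p_i)>0$ against $\opt(p_{i+1})\ge H_n(p_{i+1})$. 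For such $\epsilon$ the cost change equals $\epsilon\bigl(|R_\opt(c')|+|O_\opt(c')|-|L_\opt(c')|\bigr)$, and Lemma~\ref{lem:maximal-chains} chained with the inclusions yields
\[
|R_\opt(c')|+|O_\opt(c')|\le |R_{H_n}(c')|<|L_{H_n}(c')|+|O_{H_n}(c')|\le |L_\opt(c')|,
\]
so the cost strictly decreases, contradicting optimality of $\opt$.

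In the remaining subcase, the first step forces $\opt(p_j)\ge H_n(p_j)$ for every $j$. Let $p_{i'}$ be the leftmost index with $\opt(p_{i'})>H_n(p_{i'})$; a short order argument (using that $\opt(p_{i'-1})=H_n(p_{i'-1})$ and that $p_{i'-1},p_{i'}$ sit in the same $H_n$-chain only if $H_n(p_{i'})=\opt(p_{i'-1})+\delta<\opt(p_{i'})$) shows $p_{i'}$ is the leftmost point of its $\opt$-chain $c^*$, and the inequality $\opt\ge H_n$ on $c^*$ gives the dual inclusions $L_\opt(c^*)\subseteq L_{H_n}(c^*)$ and $R_{H_n}(c^*)\subseteq R_\opt(c^*)$. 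I would then shift $c^*$ in $\opt$ leftward by a small positive amount, exploiting the strict slack $\opt(p_{i'})-H_n(p_{i'})>0$ against the chain to its left, and appeal to Property~\ref{prop:p2} applied to the portion of the relevant $H_n$-chain meeting $c^*$ to argue that the resulting cost change is nonpositive. The shifted configuration is still optimal but strictly closer to $H_n$; iterating this reduction drives $\opt$ to $H_n$ without ever increasing the cost, so $\mathrm{cost}(\opt)=\mathrm{cost}(H_n)$ and $H_n$ is optimal.

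The hard part will be this second subcase. Lemma~\ref{lem:maximal-chains} furnishes Property~\ref{prop:p1} for every prefix of an $H_n$-chain and delivers the first subcase's strict inequality with almost no bookkeeping, but there is no stated analogue for Property~\ref{prop:p2}, for suffixes of chains, or for chains of $\opt$ rather than $H_n$. The real work will be to transfer the Property~\ref{prop:p2} inequality on the relevant $H_n$-chains into a bound on $L_\opt(c^*)$, $O_\opt(c^*)$, $R_\opt(c^*)$ sharp enough to force the leftward shift to be cost-nonincreasing, and to handle the termination of the iteration (when the shift of $c^*$ either makes $\opt$ agree with $H_n$ on $c^*$ or causes $c^*$ to merge with the adjacent $\opt$-chain on its left, after which one restarts the analysis on the new rightmost disagreement).
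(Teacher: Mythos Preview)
Your first subcase ($\opt(p_i) < H_n(p_i)$ at the rightmost disagreement) is essentially identical to the paper's: same prefix $c'$ of the $H_n$-chain, same inclusion $L_{H_n}(c')\cup O_{H_n}(c')\subseteq L_{\opt}(c')$, same rightward shift, same appeal to Lemma~\ref{lem:maximal-chains} for the strict drop in cost.

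The second subcase is where you diverge, and the gap you flag is real---but it is self-inflicted. You choose to work with the $\opt$-chain $c^*$ through the leftmost disagreement $p_{i'}$ and then have to ``transfer'' Property~\ref{prop:p2} from $H_n$-chains to $\opt$-chains, which the paper never proves and which is genuinely awkward (an $\opt$-chain can straddle several $H_n$-chains, or be a proper piece of one). The paper sidesteps this entirely: it stays with the $H_n$-chain $c=\{p_i,\dots,p_j\}$ containing the leftmost disagreement $p_l$ and shifts the suffix $\{p_l,\dots,p_j\}$ of that chain in $\opt$ to the left until $p_l$ lands on $H_n(p_l)$. Because $c$ is an $H_n$-chain, Property~\ref{prop:p2} applies to it verbatim, and since $\opt(p)\ge H_n(p)$ for every $p$ (from the first subcase) one has $L_{\opt}(c)\subseteq L_{H_n}(c)$, giving
\[
|L_{\opt}(c)| \le |L_{H_n}(c)| \le |O_{H_n}(c)|+|R_{H_n}(c)| \le |O_{\opt}(c)|+|R_{\opt}(c)|,
\]
which is exactly the balance needed for the leftward shift to be cost-nonincreasing. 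The feasibility of the shift is immediate from $\opt(p_{l-1})=H_n(p_{l-1})$ and $H_n(p_l)=H_n(p_{l-1})+\delta$. Then one iterates on the new leftmost disagreement. No $\opt$-chain bookkeeping, no suffix analogue of Lemma~\ref{lem:maximal-chains}, no merge cases.

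In short: your plan is sound in outline, but by switching to $\opt$-chains in the second half you created the very obstacle you describe as ``the hard part.'' The paper's choice to keep $c$ as an $H_n$-chain throughout is what makes Property~\ref{prop:p2} directly usable and collapses that difficulty.
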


\begin{proof}
Let $\opt$ be an optimal configuration of points $P$ which preserves the order of initial configuration. According to Lemma~\ref{lemma:lineorder}, this configuration exists.
Take the rightmost point $p_l$ in $H_n$ such that $\opt(p_l) < H_n(p_l)$ (Figure~\ref{left}). Assume that this point is in the chain $c=\{p_i, \ldots, p_j\}$. Figure~\ref{left} depicts this situation.

\begin{figure}[ht]
	\centering
	\includegraphics[width=0.8\columnwidth]{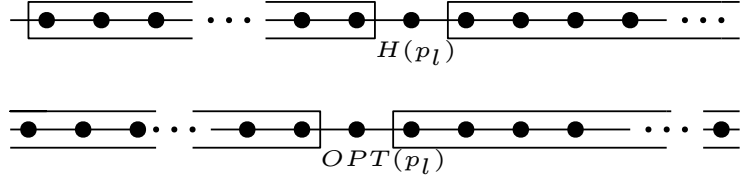}
	\caption{The above figure is the chain containing $p_{l}$ in our solution while below figure show the same points of the above chain in the optimal solution. This chain is not necessarily intact, and may have been divided into several different chains in the optimal solution.  }
	\label{left}
\end{figure}

We know from Lemma~\ref{lem:maximal-chains} that for the $c' = \{p_i, \ldots, p_l\}$ in $H_n$, we have:
\[
|L_{H_n}(c')| + |O_{H_n}(c')| > |R_{H_n}(c')|
\]
For each $k = i, \ldots, l$ we have $H_n(p_k) < \opt(p_k)$. Since, the order of points in $\opt$ is like $H_n$ and also $\opt$ is an independent configuration. Therefore, $O_{H_n}(c') \cup L_{H_n}(c') \subset L_{\opt}(c')$ and $ |L_{\opt}(c')| > |O_{\opt}(c')| + |R_{\opt}(c')|$.
Hence, if we shift the points of $c'$ in $\opt$ to right by $d(H_n(p_l), \opt(p_l))$, the number of points getting further away from their initial location will be smaller than the number of points getting closer and also the points will remain independent from each other. Hence, total movement of points will decrease, which contradicts the optimality of $\opt$. Therefore, there are no points in the optimal configuration to the left of their corresponding point in $H_n$.

On the other hand, let $H_n(p_l)$ be the leftmost point such that $\opt(p_l) > H_n(p_l) $(Figure~\ref{left}). Assume that this point is in the chain $c=\{p_i, \ldots, p_j\}$ in $H_n$. This case is shown in Figure~\ref{right}.

\begin{figure}[ht]
	\centering
	\includegraphics[width=0.8\columnwidth]{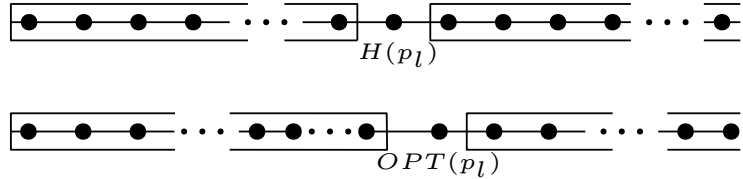}
	\caption{Same as before, rectangles are chains and $p_{l}$ is the first difference.}
	\label{right}
\end{figure}

This time, we use Property~\ref{prop:p2}. For the chain $c$ we have:
\[
|L_{H_n}(c)| \leq |O_{H_n}(c)| + |R_{H_n}(c)|.
\]
It is fairly easy to see that:
\[
|L_{\opt}(c)| \leq |L_{H_n}(c)| \leq |O_{H_n}(c)| + |R_{H_n}(c)|
\]
giving
\[
|O_{H_n}(c)| + |R_{H_n}(c)| \leq |O_{\opt}(c)| + |R_{\opt}(c)| 
\]
because the order of points in $\opt$ is like $H_n$ and also $\opt$ is an independent configuration. If we shift the points $p_l, \ldots, p_j$ in the configuration $\opt$ to left, total movement of points will not increase until after $p_l$ coincides with $H_n(p_l)$. That is to say, the total movement of the solution returned by our algorithm is less than or equal to that of the optimal configuration. After placing $p_l$ on $H_n(p_l)$ by moving all the points $\{p_l, \ldots, p_j\}$ in the configuration $\opt$ to left, we have a new configuration $\opt'$ with the same (if not less) total movement. Now, we find the next point from $\opt'$ with this property (leftmost point such that $\opt'(p_l) > H_n(p_l)$) and we continue until all the points with this property are converted to their corresponding point in $H_n$, therefore, proving that the cost of our solution is at most that of the optimal solution.
\end{proof}

A naive implementation of this algorithm runs in $O(n^2)$ time. However, in Section~\ref{sec:imp} we give a more efficient implementation that runs in $O(n \log(n))$ time.

\section{Implementaion and complexity analysis}
\label{sec:imp}
In each iteration of the algorithm, there are two phases. In the first one, a point is placed on its initial location or on the end of the last chain. Obviously, the complexity of this phase is $O(1)$. In the second phase, we move all of a chain and possibly merge it with another chain. If we update location of all the points in this phase then in each iteration, the complexity of this phase is as the size of the moving chain. Therefore, in the worst case, the complexity of the algorithm will be $O(n^2)$. We use a little trick to reduce the complexity of the algorithm. Let $c_1, \ldots, c_k$ be the chains in a configuration like $H$. Let $r(c_j)$ be a real number that shows the total movement of $c_j$ to the left since it was created. In other words, $r(c_j)$ is the total movement of the left most point of $c_j$ to the left since it has been added to the $c_j$.

Let $p$ be a newly added point to the $c_k$ and its location be $\ell$. The trick is that instead of storing the actual location of $p$, we store $\ell - r_j$. When we need the actual location of $p$, we can easily recompute that. Also, when we move the chain to the left, it is sufficient to update just $r_j$ and we do not need to update a number for each point. With this trick, we reduce the time complexity of moving the chains to $O(1)$. There are two other things that affect the complexity of the algorithm: finding the amount of movement of a chain in the second phase of each iteration and merging two chains when we deal with Case~2 of the second phase.

For finding the amount of movement of a chain, we can store all the right points of a chain in a min-heap according to their distances to their initial locations. When we add a point to the chain, we can easily add it to the heap in $O(\log n)$ and when we move the chain to the left, we need to remove the point that is locates on its initial location from the heap which can be done in $O(\log n)$.

In case of merging the chains, let $c_j$ and $c_{j+1}$ be the chain that merged and the new chain be $c'$. We need to merge $c_j$ and $c_{j+1}$'s heaps which can be done in $O(\log n)$ if we use a binomial heap\cite[p.~462]{DBLP:books/mg/CormenLRS01}. The other thing that we need to do  is to set a value for r(c'). To do this we choose one of $c_j$ or $c_{j+1}$ that have more points and set $r(c')$ as its $r$ value and update the location value of the points of the other chain using $r(c')$. Note that, the new value of $r(c')$ does not show necessarily the amount of movement of the new chain but it can be treated as before. The amortized cost of this action is $O(\log n)$ like the disjoint-set data structure~\cite[p.~504]{DBLP:books/mg/CormenLRS01}.

Due to the above analysis, we can conclude that the cost of each iteration of the algorithm is $O(\log n)$ and the complexity of the algorithm is $O(n\log n)$.

\begin{theorem}
\label{2}
Running time of the algorithm is $O(n\log(n))$
\end{theorem}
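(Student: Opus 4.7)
The plan is to bound the total work by showing that each of the $n$ iterations performs $O(\log n)$ work, amortized. I will break the cost of iteration $i$ into the pieces described in Section~\ref{sec:imp}: Phase 1, the movement computation in Phase 2, the actual shifting of a chain, and (when Case 2 applies) the merging of two chains. Summing $O(\log n)$ per iteration over $n$ iterations yields the claimed bound, so the whole theorem reduces to verifying the per-iteration bound.

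First I would handle the easy pieces. Phase 1 inspects only $H_i(p_i)$ and $I(p_{i+1})$, creates a singleton chain or appends $p_{i+1}$ to the rightmost chain, and initializes its heap entry; all of this is $O(\log n)$ because the heap insertion dominates. For the shifting step in Phase 2 I would invoke the lazy-offset trick: each chain $c_j$ stores a cumulative leftward shift $r(c_j)$, and every point stored in $c_j$ records the value $\ell - r(c_j)$ rather than its live coordinate. Then shifting an entire chain leftward by $\alpha$ or $\beta$ is just a single update to $r(c_j)$, costing $O(1)$, while the live coordinate of any stored point is recoverable in $O(1)$ whenever it is actually needed. I would point out that this preserves correctness since only relative positions within a chain matter for its internal structure and the $r$-values are combined with the raw stored numbers only when comparing to another chain or to the initial positions.

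Next I would bound the computation of $\alpha$ and $\beta$ in Phase 2. For $\beta$ the only data needed is the rightmost point of $c_{k-1}$ and the leftmost point of $c'_k$, both $O(1)$ after recovering their live coordinates. For $\alpha$ I would keep, for each chain, a min-heap keyed by $|d(G_{i+1}(p_j), I(p_j))|$ over the points currently in $R$ for that chain. Insertion on entering Phase 1 is $O(\log n)$; reading $\alpha$ is $O(1)$ at the root; and in Case 1 of Phase 2, when points coincide with their initial positions, a bounded number of extract-min operations deletes them, each at $O(\log n)$, which I would charge against their earlier insertion so that each point contributes only $O(\log n)$ total to this part over the life of the algorithm.

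The main obstacle is the merging step in Case 2 of Phase 2, which has two subtleties. For the heaps I would use binomial heaps, whose meld is $O(\log n)$ worst-case (citing the reference already in the paper). The more delicate point is the $r$-value of the merged chain: when $c_{k-1}$ and $c'_k$ are merged, their stored offsets are in general different, so to keep the lazy representation consistent we must pick one chain's $r$ as the new $r(c')$ and re-base every stored location in the other chain by adding the discrepancy. I would argue this by a union-by-size / weighted-union argument, exactly as in the disjoint-set analysis cited: always re-base the smaller chain into the larger, so any point is re-based only when it moves to a chain of at least double the size, bounding the total re-basing work over the entire algorithm by $O(n \log n)$ and giving $O(\log n)$ amortized per merge. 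Combining all four components, each iteration costs $O(\log n)$ amortized, and summing gives the desired $O(n \log n)$ total running time.
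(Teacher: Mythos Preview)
Your proposal is correct and follows essentially the same approach as the paper: the lazy offset $r(c_j)$ to make chain shifts $O(1)$, a min-heap over the $R$-points to obtain $\alpha$, binomial-heap meld for chain heaps, and the weighted-union (re-base the smaller side) argument to amortize the offset reconciliation at merge time. Your write-up is in fact slightly more explicit than the paper's about the charging of extract-min operations and the constant-time computation of $\beta$, but the decomposition and the key ideas are identical.
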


\section{Conclusion}
In this paper we considered the problem of minimizing total sum of movement of points to reach independence and presented an $O(nlogn)$ algorithm. While the problem for minimizing the movement of point on a circle( or a closed curve) still remains unsolved. It is easy to see that our properties determining a local optimal can be considered in the circle case as well. However, this problem shows to be a little more trickier to solve and these properties might not be enough.

\section{Acknowledgment}
In the end, we would like to thank our dear friend, Sahand Mozaffari, for his thoughtful comments and suggestions.

\small
\baselineskip= 0.85\baselineskip
\bibliographystyle{abbrv}
\bibliography{LNCS.bib}

\begin{thebibliography}{10}

\bibitem{bredin2005deploying}
J.~L. Bredin, E.~D. Demaine, M.~Hajiaghayi, and D.~Rus.
\newblock Deploying sensor networks with guaranteed capacity and fault
  tolerance.
\newblock In {\em Proceedings of the 6th ACM international symposium on Mobile
  ad hoc networking and computing}, pages 309--319. ACM, 2005.

\bibitem{corke2004autonomous}
P.~Corke, S.~Hrabar, R.~Peterson, D.~Rus, S.~Saripalli, and G.~Sukhatme.
\newblock Autonomous deployment and repair of a sensor network using an
  unmanned aerial vehicle.
\newblock In {\em Robotics and Automation, 2004. Proceedings. ICRA'04. 2004
  IEEE International Conference on}, volume~4, pages 3602--3608. IEEE, 2004.

\bibitem{DBLP:books/mg/CormenLRS01}
T.~H. Cormen, C.~E. Leiserson, R.~L. Rivest, and C.~Stein.
\newblock {\em Introduction to Algorithms, Second Edition}.
\newblock The {MIT} Press and McGraw-Hill Book Company, 2001.

\bibitem{DeMv}
E.~D. Demaine, M.~T. Hajiaghayi, H.~Mahini, A.~S. Sayedi{-}Roshkhar, S.~O.
  Gharan, and M.~Zadimoghaddam.
\newblock Minimizing movement.
\newblock {\em {ACM} Transactions on Algorithms}, 5(3), 2009.

\bibitem{DeTr}
E.~D. Demaine, M.~T. Hajiaghayi, and D.~Marx.
\newblock Minimizing movement: Fixed-parameter tractability.
\newblock {\em {ACM} Transactions on Algorithms}, 11(2):14:1--14:29, 2014.

\bibitem{Doddi}
S.~Doddi, M.~V. Marathe, A.~Mirzaian, B.~M.~E. Moret, and B.~Zhu.
\newblock Map labeling and its generalizations.
\newblock In {\em Proceedings of the Eighth Annual {ACM-SIAM} Symposium on
  Discrete Algorithms, 5-7 January 1997, New Orleans, Louisiana.}, pages
  148--157, 1997.

\bibitem{DuCk}
A.~Dumitrescu and M.~Jiang.
\newblock Constrained k-center and movement to independence.
\newblock {\em Discrete Applied Mathematics}, 159(8):859--865, 2011.

\bibitem{Hsiang}
T.~Hsiang, E.~M. Arkin, M.~A. Bender, S.~P. Fekete, and J.~S.~B. Mitchell.
\newblock Algorithms for rapidly dispersing robot swarms in unknown
  environments.
\newblock In {\em Algorithmic Foundations of Robotics V, Selected Contributions
  of the Fifth International Workshop on the Algorithmic Foundations of
  Robotics, {WAFR} 2002, Nice, France, December 15-17, 2002}, pages 77--94,
  2002.

\bibitem{Jiang}
M.~Jiang, J.~Qian, Z.~Qin, B.~Zhu, and R.~J. Cimikowski.
\newblock A simple factor-3 approximation for labeling points with circles.
\newblock {\em Inf. Process. Lett.}, 87(2):101--105, 2003.

\bibitem{LiMm}
S.~Li and H.~Wang.
\newblock Algorithms for minimizing the movements of spreading points in linear
  domains.
\newblock In {\em Proceedings of the 27th Canadian Conference on Computational
  Geometry, {CCCG} 2015, Kingston, Ontario, Canada, August 10-12, 2015}, 2015.

\end{thebibliography}

\end{document}